\DeclareMathAlphabet{\pazocal}{OMS}{zplm}{m}{n}
\newcommand{\norm}[1]{\left\lVert#1\right\rVert}
\newcommand{\Newmodel}{Signal Matrix Model}
\newcommand{\newmodel}{signal matrix model}
\newcommand{\newacro}{SMM}
\newtheorem{theorem}{Theorem}
\newtheorem{remark}{Remark}
\newtheorem{prop}{Proposition}
\newtheorem{defi}{Definition}
\let\color@begingroup\relax
   \let\color@endgroup\relax}{}%
\def\fix@ieeecolor@hbox#1{%
  \hbox{\color@begingroup#1\color@endgroup}}
\patchcmd\@makecaption{\hbox}{\fix@ieeecolor@hbox}{}{\FAILED}
\patchcmd\@makecaption{\hbox}{\fix@ieeecolor@hbox}{}{\FAILED}
\def\BibTeX{{\rm B\kern-.05em{\sc i\kern-.025em b}\kern-.08em
    T\kern-.1667em\lower.7ex\hbox{E}\kern-.125emX}}
\begin{document}
\title{Maximum Likelihood Estimation in Data-Driven Modeling and Control}
\author{Mingzhou Yin, Andrea Iannelli, and Roy S. Smith, \IEEEmembership{Fellow, IEEE}%
\thanks{This work was supported by the Swiss National Science Foundation under Grant 200021\_178890.}
\thanks{This work has been accepted for publication in the IEEE Transactions on Automatic Control.}
\thanks{© 2021 IEEE. Personal use of this material is permitted. Permission from IEEE must be obtained for all other uses, in any current or future media, including reprinting/republishing this material for advertising or promotional purposes, creating new collective works, for resale or redistribution to servers or lists, or reuse of any copyrighted component of this work in other works.}
\thanks{The authors are with the Automatic Control Laboratory, Swiss Federal Institute of Technology (ETH Z\"urich), 8092 Zurich, Switzerland (e-mail: myin@control.ee.ethz.ch; iannelli@control.ee.ethz.ch; rsmith@control.ee.ethz.ch).}}

\maketitle
\pagestyle{empty}
\thispagestyle{empty}

\begin{abstract}
Recently, various algorithms for data-driven simulation and control have been proposed based on the Willems' fundamental lemma. However, when collected data are noisy, these methods lead to ill-conditioned data-driven model structures. In this work, we present a maximum likelihood framework to obtain an optimal data-driven model, the \newmodel, in the presence of output noise. Data compression and noise level estimation schemes are also proposed to apply the algorithm efficiently to large datasets and unknown noise level scenarios. Two approaches in system identification and receding horizon control are developed based on the derived optimal estimator. The first one identifies a finite impulse response model
. This approach improves the least-squares estimator with less restrictive assumptions. The second one applies the \newmodel\ as the predictor in predictive control. The control performance is shown to be better than existing data-driven predictive control algorithms, especially under high noise levels. Both approaches demonstrate that the derived estimator provides a promising framework to apply data-driven algorithms to noisy data.
\end{abstract}

\begin{IEEEkeywords}
Data-driven modeling, maximum likelihood estimation, model predictive control, system identification.
\end{IEEEkeywords}

\section{Introduction}
\label{sec:introduction}
\IEEEPARstart{F}{ollowing} its remarkable success in artificial intelligence, learning from data is becoming a popular topic in various engineering domains \cite{Hou_2013}. This concept is by no means a new idea for control engineering. The system identification paradigm has been widely used in control applications, where data are used to fit an a priori parametrized model \cite{LjungBook2}. The control strategy is then designed with the identified nominal model based on the certainty equivalence principle.

However, this conventional scheme of learning dynamical systems is challenged by increasing complexity of systems and the large amount of data available. In particular, a low-dimensional model structure that is suitable to design compact, closed-form control strategies can be very hard and costly to obtain for complex systems \cite{Hjalmarsson_2005}. 
Therefore, alternative paths are investigated to facilitate control design directly from raw measurement data of dynamical systems. 
For example, reinforcement learning techniques are widely applied in this area \cite{Recht_2019}
. Such approaches typically avoid predicting the behavior of systems explicitly but aim at the control strategy directly.

In this work, the conventional parametric model is replaced by a data-driven predictor with a non-parametric structure \cite{Van_Waarde_2020}. In the seminal work from Willems \textit{et al. }\cite{Willems_2005}, a single input-output trajectory of the linear system is shown to be able to characterize all possible trajectories of length up to the order of persistency of excitation by constructing Hankel matrices from data. This result is known as the Willems’ fundamental lemma. With this result, the behavior of the system can be simulated and thus controlled by selecting a suitable combination of sections from the known trajectory that satisfies the initial condition constraints \cite{Markovsky_2008, DePersis_2020, vanWaarde_2020}.

This observation is especially suitable for optimal trajectory tracking. In this regard, model predictive control (MPC) is known to be very effective when an accurate model of the system is available \cite{camacho2016model}. From the Willems’ fundamental lemma, the output prediction step in MPC can be achieved by using known trajectories of the system directly, instead of an explicit model. This data-driven alternative to MPC algorithms, known as data-enabled predictive control (DeePC) \cite{Coulson_2019}, has lead to multiple successful applications \cite{Huang_2019, Coulson_2019_reg, Alpago_2020} with stability and robustness proofs \cite{Berberich_2020}. This framework is also able to handle online data and parameter variations \cite{pmlr-v144-yin21a}.

These types of ``data-driven" approaches differ significantly from model-free approaches, and act as a surrogate for conventional models in model-based control to provide a description of system trajectories based on measured data. In fact, with a low-rank approximation, this approach directly leads to the intersection algorithm in subspace identification where state-space models can be derived \cite{Moonen_1989}. The main differences of the data-driven approach compared to conventional model-based methods are: 1) the model is implicit with no closed-form solution in general; 2) the model is over-parametrized in that it does not impose any assumption on the system structure other than linearity. In this paper, this implicit and over-parametrized model is called the data-driven model.

However, it is well-known that when data are noisy, over-parametrized models may lead to high variances and overfitting \cite{Geman_1992}. In data-driven modeling, finding a combination of known trajectory sections that give reliable prediction is an ill-conditioned problem for datasets with stochastic noise. In current data-driven control schemes, empirical regularizers \cite{Berberich_2020,Coulson_2019} or least-norm problems \cite{Favoreel_1999,Huang_2019,Sedghizadeh_2018} are introduced to select a reasonable combination for prediction. Yet, it is not clear what is the optimal way to combine a large set of known trajectory sections to achieve the most reliable prediction. 

Another application of data-driven modeling is to simulate the system response \cite{Markovsky_2005b,Quintana_Carapia_2020}. The main advantage of applying this approach in system identification is that it gives the correct estimation of nonparametric models in the noise-free case. 
Again in this scenario, the best practice for solving the underdetermined linear system in the Willems' fundamental lemma in the noisy case is not understood. For computational simplicity, the Moore-Penrose pseudoinverse solution that solves the least-norm problem is often the default choice \cite{Huang_2019}, leading to the data-driven subspace predictor \cite{Sedghizadeh_2018}.

As can be seen from the above discussion, one of the central questions in data-driven approaches based on the Willems' fundamental lemma is how to obtain an optimal data-driven model from a large noise-corrupted dataset \cite{van2020noisy}. Therefore, in the first part of the paper, we propose a maximum likelihood estimation (MLE) framework to estimate such an optimal model with noise in both offline data and online measurements. This optimal model is named the \newmodel\ (\newacro). This framework optimizes the combination of offline trajectories by maximizing the conditional probability of observing the predicted output trajectory and the measured past outputs. The \newacro\ is shown to obtain more accurate output estimates than the least-norm solution. 
In addition, a preconditioning strategy is proposed based on singular value decomposition (SVD) to compress the data matrix such that online complexity is 
fixed for large datasets. When the noise levels are unknown, they can also be estimated with a data-driven approach.

In the second part of the paper, we present two scenarios where the \newacro\ leads to effective algorithms: 1) estimating finite impulse response (FIR) models 
in system identification; and 2) obtaining a tuning-free data-driven predictive control scheme. In the first scenario, 
the impulse response is conventionally estimated by least-squares regression, which requires knowing the input history and neglecting truncation errors. In this work, it is replaced by the \newmodel\ simulated with an impulse, which guarantees an unbiased estimate. Results show that the model fitting is enhanced when the transient response is unknown or the truncation error of the impulse response is large.

In the second scenario, we replace the prediction part in the DeePC algorithm with the \newacro. This predictor is shown to be superior to the pseudoinverse subspace predictor in predictive control. The main advantage of the proposed algorithm is that it avoids the difficult hyperparameter tuning problem in regularized DeePC. The control performance of the proposed algorithm is shown to be better than the DeePC algorithm with optimal hyperparameters when the noise is significant, and similar in the low noise scenario.

The remainder of the paper is organized as follows. Section~\ref{sec:2} defines the notions and preliminaries used in the paper. Section~\ref{sec:3} reviews the Willems' fundamental lemma and its application to deterministic systems. Section~\ref{sec:4} derives the \newmodel\ with MLE and presents an optimal data-driven simulation algorithm. Section~\ref{sec:4.5} discusses the use of \newacro\ for large datasets and unknown noise levels and analyzes its performance. This model is then applied to two problems: Section~\ref{sec:5} identifies an FIR model using \newacro\ simulation
; Section~\ref{sec:6} applies the \newacro\ predictor in predictive control. Section~\ref{sec:7} concludes the paper.

\section{Notation \& Preliminaries}
\label{sec:2}

For a vector $x$, the weighted $l_2$-norm $(x^\mathsf{T}Px)^{\frac{1}{2}}$ is denoted by $\norm{x}_P$. The symbol $\pazocal{N}(\mu,\Sigma)$ indicates a Gaussian distribution with mean $\mu$ and covariance $\Sigma$. The expectation and the covariance of a random vector $x$ are denoted by $\mathbb{E}(x)$ and $\text{cov}(x)$ respectively. For a matrix $X$, the vectorization operator stacks its columns in a single vector and is denoted by $\text{vec}(X)$; $X^\dagger$ indicates the Moore-Penrose pseudoinverse; $(X)_{i,j}$ denotes the $(i,j)$-th entry of $X$. The symbol $\mathbb{S}_{++}^{n}$ indicates the set of $n$-by-$n$ positive definite matrix. For a sequence of matrices $X_1,\dots,X_n$, we denote $[X_1^\mathsf{T}\ \dots\ X_n^\mathsf{T}]^\mathsf{T}$ by $\text{col}\left(X_1,\dots,X_n\right)$. Given a signal $x:\mathbb{Z}\to \mathbb{R}^n$, its trajectory from $k$ to $k+N-1$ is denoted as $(x_i)_{i=k}^{k+N-1}$, and in the vector form as $\mathbf{x}=\text{col}(x_k,\dots,x_{k+N-1})$.

Consider a discrete-time linear time-invariant (LTI) system with output noise, given by
\begin{equation}
\begin{cases}
x_{t+1}&=\ A x_t+B u_t,\\
\hfil y_t&=\ C x_t + D u_t + w_t,
\label{eq:sys}
\end{cases}
\end{equation}
where $x_t \in \mathbb{R}^{n_x}$, $u_t \in \mathbb{R}^{n_u}$, $y_t \in \mathbb{R}^{n_y}$, $w_t \in \mathbb{R}^{n_y}$ are the states, inputs, outputs, and output noise respectively. The system is denoted compactly by $(A,B,C,D)$. The pair $(A,B)$ is controllable if $[B\ AB\ \dots\ A^{n_x-1}B]$ has full row rank.

The notion of persistency of excitation is defined as follows.
\begin{defi}
A signal trajectory $(x_i)_{i=0}^{N-1}\in \mathbb{R}^n\times\{0,\dots,N-1\}$ is said to be persistently exciting of order $L$ if the block Hankel matrix
\begin{equation}
    X=\begin{bmatrix}
    x_0&x_1&\cdots&x_{M-1}\\
    x_1&x_2&\cdots&x_{M}\\
    \vdots&\vdots&\ddots&\vdots\\
    x_{L-1}&x_L&\cdots&x_{N-1}
    \end{bmatrix}\in \mathbb{R}^{Ln\times M}
\end{equation}
\label{def:1}
\end{defi}
has full row rank, where $M=N-L+1$ \cite{Willems_2005}.

Intuitively, this definition means that sections of length $L$ of the trajectory span $\mathbb{R}^{Ln}$. When used as the input to a linear dynamical system, it can thus excite all the controllable behaviors of the system in a window of length $L$. A necessary condition of Definition~\ref{def:1} is $N\geq L(n+1)-1$, which gives a lower bound on the trajectory length.

\section{Deterministic Data-Driven Modeling}
\label{sec:3}

In this section, we first review the Willems' fundamental lemma and a few related results in a state-space formulation, followed by an overview of deterministic data-driven simulation and control.

\subsection{Willems' Fundamental Lemma}

Built on the notion of the persistency of excitation, the Willems' fundamental lemma shows that all the behavior of a linear system can be captured by a single persistently exciting trajectory of the system when no noise is present. This lemma was originally proposed in the context of behavioral system theory \cite{Willems_2005,willems1997introduction}, where systems are characterized by the subspace that contains all possible trajectories. It was later reformulated in the state-space context \cite{DePersis_2020,vanWaarde_2020}. In the state-space formulation, the output trajectory is unique to a particular input trajectory when a sufficiently long past input-output trajectory is specified as the initial condition. The length of the past trajectory should not be shorter than the state dimension. This idea has strong ties with the intersection algorithm in subspace identification \cite{Moonen_1989}, where a low-order subspace of the data matrices that corresponds to a low state dimension is sought.

We summarize the available results on data-driven modeling based on the Willems' fundamental lemma for finite-dimensional LTI systems, which are the foundation for the data-driven methods discussed in this paper. These results hold exactly only when the system is noise-free, i.e., $\forall i, w_i=0$.

\begin{theorem}
Consider a finite-dimensional LTI system $(A,B,C,D)$. Let $(u_i^d, x_i^d, y_i^d)_{i=0}^{N-1}$ be a triple of input-state-output trajectory of the system. If the pair $(A,B)$ is controllable and the input is persistently exciting of order $(L+n_x)$, then
\begin{itemize}
    \item[(a)]the matrix
    \begin{equation}
    \begin{bmatrix}X\\U\end{bmatrix}:=\begin{bmatrix}
        x_0^d&x_1^d&\cdots&x_{M-1}^d\\\hline
        u_0^d&u_1^d&\cdots&u_{M-1}^d\\
        \vdots&\vdots&\ddots&\vdots\\
        u_{L-1}^d&u_L^d&\cdots&u_{N-1}^d\\
        \end{bmatrix}
    \end{equation} has full row rank (Corollary 2 in \cite{Willems_2005}, Theorem 1(i) in \cite{vanWaarde_2020}, Lemma 1 in \cite{DePersis_2020});
    \item[(b)] the pair $(u_i,y_i)_{i=0}^{L-1}$ is an input-output trajectory of the system iff there exists $g$, such that
    \begin{equation}
    \begin{bmatrix}u_0\\\vdots\\u_{L-1}\\\hline    y_0\\\vdots\\y_{L-1}\end{bmatrix}=\begin{bmatrix}U\\Y\end{bmatrix}g:=\begin{bmatrix}
        u_0^d&u_1^d&\cdots&u_{M-1}^d\\
        \vdots&\vdots&\ddots&\vdots\\
        u_{L-1}^d&u_L^d&\cdots&u_{N-1}^d\\\hline
        y_0^d&y_1^d&\cdots&y_{M-1}^d\\
        \vdots&\vdots&\ddots&\vdots\\
        y_{L-1}^d&y_L^d&\cdots&y_{N-1}^d\\
        \end{bmatrix}g
    \end{equation}
    (Theorem 1 in \cite{Willems_2005}, Theorem 1(ii) in \cite{vanWaarde_2020}, Lemma 2 in \cite{DePersis_2020});
    \item[(c)] $\text{rank}(\text{col}(U,Y))=n_x+n_u L$ (Theorem 2 in \cite{Moonen_1989});
    \item[(d)] the vector $(y_i)_{i=0}^{L'-1}$ is the unique output trajectory of the system with past trajectory $(u_i,y_i)_{i=-L_0}^{-1}$ and input trajectory $(u_i)_{i=0}^{L'-1}$, where $n_y L_0\geq n_x$ and $L'=L-L_0$, iff there exists $g$, such that
    \begin{equation}
    \begin{bmatrix}u_{-L_0}\\\vdots\\u_{L'-1}\\\hline    y_{-L_0}\\\vdots\\y_{L'-1}\end{bmatrix}=\begin{bmatrix}U\\Y\end{bmatrix}g:=\begin{bmatrix}
        u_0^d&u_1^d&\cdots&u_{M-1}^d\\
        \vdots&\vdots&\ddots&\vdots\\
        u_{L-1}^d&u_L^d&\cdots&u_{N-1}^d\\\hline
        y_0^d&y_1^d&\cdots&y_{M-1}^d\\
        \vdots&\vdots&\ddots&\vdots\\
        y_{L-1}^d&y_L^d&\cdots&y_{N-1}^d\\
        \end{bmatrix}g
        \label{eqn:fund}
    \end{equation}
    (Proposition 1 in \cite{Markovsky_2008}).
\end{itemize}
\label{thm:1}
\end{theorem}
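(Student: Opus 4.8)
The plan is to treat part (a) as the crux and obtain (b)--(e) from it together with elementary LTI structure. For (a), I would prove that $\begin{bmatrix}X\\U\end{bmatrix}$ has full row rank $n_x+n_u L$ by showing its left null space is trivial. Suppose $\mathrm{col}(\eta,\xi_0,\dots,\xi_{L-1})$, with $\eta\in\mathbb{R}^{n_x}$, annihilates every column, i.e.
\begin{equation}
\eta^\mathsf{T} x_j^d + \sum_{i=0}^{L-1}\xi_i^\mathsf{T} u_{j+i}^d = 0,\quad j=0,\dots,M-1.
\end{equation}
Substituting the closed-form response $x_j^d = A^j x_0^d + \sum_{k=0}^{j-1}A^{j-1-k}Bu_k^d$ turns this into a single identity that must hold along the whole input record. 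The key observation is that, since the sliding input window enters through $\xi$ while the initial state enters through $\eta^\mathsf{T}A^{j}$ and the reachability terms, such an identity holding for all $j$ would force a nontrivial dependence incompatible with persistency of excitation of order $L+n_x$, unless the state contribution itself vanishes; controllability of $(A,B)$ (full rank of $[B\ AB\ \cdots\ A^{n_x-1}B]$) then prevents that contribution from vanishing for $\eta\neq 0$, giving $\eta=0$ and subsequently $\xi=0$. The careful bookkeeping that matches the \emph{extra} $n_x$ orders of excitation against the reachability matrix is exactly Corollary 2 of \cite{Willems_2005} / Theorem 1(i) of \cite{vanWaarde_2020}, and I expect this to be the main obstacle; the remaining parts are comparatively mechanical.

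Parts (b) and (c) follow by factoring the data matrix through the state. Writing $\mathcal{O}_L$ for the depth-$L$ observability matrix and $\mathcal{T}_L$ for the Toeplitz matrix of Markov parameters, each length-$L$ window obeys $Y=\mathcal{O}_L X + \mathcal{T}_L U$, so
\begin{equation}
\begin{bmatrix}U\\Y\end{bmatrix}=\begin{bmatrix}I&0\\\mathcal{T}_L&\mathcal{O}_L\end{bmatrix}\begin{bmatrix}U\\X\end{bmatrix}.
\end{equation}
For (b), the ``if'' direction is immediate: any $g$ produces a linear combination of genuine trajectories, hence a trajectory. For ``only if'', an arbitrary length-$L$ trajectory $(u_i,y_i)$ is generated by some $x_0$ and equals $\begin{bmatrix}I&0\\\mathcal{T}_L&\mathcal{O}_L\end{bmatrix}\mathrm{col}(\mathbf{u},x_0)$; since (a) makes $\begin{bmatrix}U\\X\end{bmatrix}$ (a row permutation of the matrix in (a)) surjective, there is $g$ with $\begin{bmatrix}U\\X\end{bmatrix}g=\mathrm{col}(\mathbf{u},x_0)$, and applying the block-triangular factor yields $\begin{bmatrix}U\\Y\end{bmatrix}g=\mathrm{col}(\mathbf{u},\mathbf{y})$. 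For (c), surjectivity of $\begin{bmatrix}U\\X\end{bmatrix}$ implies the product's rank equals that of the block-triangular factor, namely $n_u L+\mathrm{rank}(\mathcal{O}_L)$; when the realization is observable with $L\ge n_x$ this equals $n_x+n_u L$.

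Parts (d) and (e) are then short. Part (d) is an injectivity statement: letting $x_{-L_0}$ be the state at the start of the past window, the past outputs satisfy $\mathbf{y}_{\mathrm{past}}=\mathcal{O}_{L_0}x_{-L_0}+\mathcal{T}_{L_0}\mathbf{u}_{\mathrm{past}}$, so $\mathcal{O}_{L_0}x_{-L_0}=\mathbf{y}_{\mathrm{past}}-\mathcal{T}_{L_0}\mathbf{u}_{\mathrm{past}}$; minimality together with $L_0\ge n_x$ makes $\mathcal{O}_{L_0}$ injective, fixing $x_{-L_0}$ and therefore $x_0$ by forward propagation. Part (e) combines (b) and (d) on the length-$L$ window $[-L_0,L'-1]$: existence of $g$ reproducing $\mathrm{col}(\mathbf{u},\mathbf{y})$ is exactly (b), while (d) shows the past trajectory pins down $x_0$, after which determinism of the LTI recursion makes $(y_i)_{i=0}^{L'-1}$ the unique output consistent with $x_0$ and $(u_i)_{i=0}^{L'-1}$. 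Overall I would prove (a) in full and cite the referenced lemmas for the routine verifications in (b)--(e).
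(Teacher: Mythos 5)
The first thing to note is that the paper does not actually prove Theorem~\ref{thm:1}: it presents the theorem explicitly as a summary of known results, and every item is justified only by a citation (Willems \emph{et al.}, van Waarde \emph{et al.}, De Persis--Tesi, Moonen \emph{et al.}, Markovsky--Rapisarda). So there is no in-paper argument to compare against step by step; your proposal has to be measured against the cited sources. On that basis, the parts you actually argue are correct and follow the standard state-space route used in those references: deriving (b) and (c) from (a) via the data equation $Y=\mathcal{O}_L X+\mathcal{T}_L U$ and the factorization $\mathrm{col}(U,Y)=\begin{bmatrix}I&0\\\mathcal{T}_L&\mathcal{O}_L\end{bmatrix}\mathrm{col}(U,X)$, then (d) from injectivity of $\mathcal{O}_{L_0}$ under minimality with $L_0\geq n_x$, and (e) by combining (b), (d), and determinism of the state recursion. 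A genuinely useful byproduct of your proof of (c) is that it exposes a hypothesis the theorem leaves implicit: the rank equals $n_x+n_u L$ only when $\mathrm{rank}(\mathcal{O}_L)=n_x$, i.e., the realization is observable and $L\geq n_x$; this matches the minimality assumed in Moonen \emph{et al.}, even though the paper lists (c) before minimality is invoked.

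The one genuine gap is part (a), which you yourself identify as the crux but do not prove. The pivotal step --- that the annihilation identity is ``incompatible with persistency of excitation of order $L+n_x$ unless the state contribution itself vanishes,'' and that controllability ``prevents that contribution from vanishing for $\eta\neq 0$'' --- is a restatement of the conclusion, not an argument, and the second half is misleading as phrased: controllability by itself does not prevent $\eta^\mathsf{T}x_j^d$ plus the input terms from vanishing along the \emph{particular recorded} trajectory, since the columns of $X$ are the states actually visited, not arbitrary states. Showing that excitation of order $L+n_x$ and reachability jointly exclude such a dependence is exactly the nontrivial construction in Corollary~2 of Willems \emph{et al.} / Theorem~1(i) of van Waarde \emph{et al.}, which you defer to. Deferring is consistent with what the paper itself does for every part, so read as ``(a) by citation, (b)--(e) derived from it,'' your proposal is sound; but it falls short of your stated intention to ``prove (a) in full,'' and as a self-contained proof it has this single, central hole.
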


\begin{remark}
    The controllability and the persistency of excitation conditions can be relaxed for the rank condition in part (c) (Corollary 19 in \cite{IM-FD}) or requirements on the initial state (Theorem 1 in \cite{yu2021controllability}).
\end{remark}

In Theorem \ref{thm:1}, parts (a) and (b) state the original Willems' fundamental lemma; part (c) draws the connection between data-driven modeling and subspace identification methods; and part (d) further gives the uniqueness of the trajectory by fixing a sufficiently long past trajectory. Part (d) also allows the formulation to be applied in simulation and predictive control.

\subsection{Deterministic Data-Driven Simulation and Control}

In the noise-free case, the system can be simulated solely based on a known trajectory by applying Theorem \ref{thm:1}(d) \cite{Markovsky_2005b}. Define
\begin{equation}
    U_p = \begin{bmatrix}
        u_0^d&u_1^d&\cdots&u_{M-1}^d\\
        \vdots&\vdots&\ddots&\vdots\\
        u_{L_0-1}^d&u_{L_0}^d&\cdots&u_{M+L_0-2}^d\\
        \end{bmatrix}\in\mathbb{R}^{L_0 n_u\times M},
    \label{eqn:Up}
\end{equation}
\begin{equation}
    U_f = \begin{bmatrix}
        u_{L_0}^d&u_{L_0+1}^d&\cdots&u_{M+L_0-1}^d\\
        \vdots&\vdots&\ddots&\vdots\\
        u_{L-1}^d&u_L^d&\cdots&u_{N-1}^d\\
        \end{bmatrix}\in\mathbb{R}^{L' n_u\times M},
    \label{eqn:Uf}
\end{equation}
\begin{equation}
    \mathbf{u}_{\text{ini}} = \text{col}\left(u_{-L_0},\cdots,u_{-1}\right)\in\mathbb{R}^{L_0 n_u},
\end{equation}
\begin{equation}
    \mathbf{u} = \text{col}\left(u_{0},\cdots,u_{L'-1}\right)\in\mathbb{R}^{L' n_u},
\end{equation}
and similarly for $Y_p$, $Y_f$, $\mathbf{y}_{\text{ini}}$, and $\mathbf{y}$. Then we interpret (\ref{eqn:fund}) as an implicit model of the system trajectory parametrized by $g$, namely
\begin{subequations}
  \begin{empheq}[left=\empheqlbrace]{align}
    \mathbf{u}=U_f g, \label{eqn:uu}\\
    \mathbf{y}=Y_f g, \label{eqn:yy}
    \end{empheq}
    \label{eqn:uy}%
\end{subequations}
subject to the initial condition requirement
\begin{subequations}
  \begin{empheq}[left=\empheqlbrace]{align}
    \mathbf{u}_{\text{ini}} = U_p g, \label{eqn:up}\\
    \mathbf{y}_{\text{ini}} = Y_p g. \label{eqn:yp}
    \end{empheq}
    \label{eqn:uyp}%
\end{subequations}
Thus, the system can be simulated by means of a two-step approach with $g$ as the intermediate parameter as shown in Algorithm~\ref{al:1}. The system identification process is performed online for a particular input by estimating $g$. This algorithm effectively gives an implicit model of the system in the form of
\begin{equation}
    \mathbf{y}=f(\mathbf{u};\mathbf{u}_{\text{ini}},\mathbf{y}_{\text{ini}}, U_p,U_f,Y_p,Y_f),
    \label{eqn:ddmodel}
\end{equation}
where $U_p,U_f,Y_p,Y_f$ are offline data that describe the behaviors of the system, and $\mathbf{u}_{\text{ini}}, \mathbf{y}_{\text{ini}}$ are online data that estimate the initial condition.

\begin{algorithm}[htb]
	\caption{Noise-free data-driven simulation \cite{Markovsky_2005b}}
	\begin{algorithmic}[1]
	\State \textbf{Given: }$U_p,U_f,Y_p,Y_f$.
	\State \textbf{Input: }$\mathbf{u}_{\text{ini}},\mathbf{y}_{\text{ini}},\mathbf{u}$.
	\State Solve the linear system
	\begin{equation}
	    \text{col}\left(\mathbf{u}_{\text{ini}},\mathbf{y}_{\text{ini}},\mathbf{u}\right)=\text{col}\left(U_p,Y_p,U_f\right)g
	    \label{eqn:con}
	\end{equation}
	for $g$.
	\State \textbf{Output: }$\mathbf{y}=Y_f g$.
	\end{algorithmic}
	\label{al:1}
\end{algorithm}

Multiple control algorithms have been developed based on this model structure \cite{DePersis_2020,Markovsky_2008}. In this work, we focus on the optimal trajectory tracking problem, which optimizes the following control cost over a horizon of length $L'$ at each time instant $t$ \cite{camacho2016model}:
\begin{equation}
    J_\text{ctr}(\mathbf{u},\mathbf{y})=\sum_{k=0}^{L'-1}\left(\norm{y_k-r_{t+k}}_Q^2+\norm{u_k}_R^2\right),
\end{equation}
where $\mathbf{r}$ is the reference trajectory, and $Q$ and $R$ are the output and the input cost matrices respectively \cite{camacho2016model}. At each time instant, the first entry in the newly optimized input trajectory is applied to the system in a receding horizon fashion.

Algorithm~\ref{al:1} can be applied as the predictor in place of the model-based predictor in conventional MPC algorithms. This leads to the following optimization problem
\begin{equation}
\begin{matrix}
    \underset{\mathbf{u},\mathbf{y},g}{\text{minimize}} &  J_\text{ctr}(\mathbf{u},\mathbf{y})\\
    \text{subject to} & (\ref{eqn:uy}),(\ref{eqn:uyp}),\mathbf{u} \in \pazocal{U}, \mathbf{y} \in \pazocal{Y},
\end{matrix}
\label{eqn:deepc0}
\end{equation}
where $\pazocal{U}$ and $\pazocal{Y}$ are the constraint sets of the inputs and the outputs respectively. Vectors $\mathbf{u}_{\text{ini}}$ and $\mathbf{y}_{\text{ini}}$ are the immediate past input-output measurements online. This method is known as the unregularized DeePC algorithm \cite{Coulson_2019}. 

\section{Maximum Likelihood Data-Driven Model: \Newmodel}
\label{sec:4}

The linear system (\ref{eqn:con}) is highly underdetermined when a large dataset is available. When the data are noise-free, this parameter estimation problem is trivial, where any solution to (\ref{eqn:con}) gives an exact output model of the system, according to Theorem~\ref{thm:1}(d).

However, the problem of finding the model (\ref{eqn:ddmodel}) becomes ill-conditioned when the data are noisy. In this case, Theorem~\ref{thm:1}(c) is no longer satisfied. Instead, $\text{col}(U,Y)$ has full row rank almost surely. If we still follow Algorithm~\ref{al:1}, any output trajectory $\mathbf{y}$ can be obtained by choosing different solutions to (\ref{eqn:con}). In fact, Theorem \ref{thm:1}(d) does not hold exactly for the noisy case, so satisfying condition (\ref{eqn:con}) is not guaranteed to be statistically optimal. An empirical remedy to this problem is to use the Moore-Penrose pseudoinverse solution of $g$, namely
\begin{equation}
    g_{\text{pinv}}=\text{col}\left(U_p,Y_p,U_f\right)^\dagger\text{col}\left(\mathbf{u}_{\text{ini}},\mathbf{y}_{\text{ini}},\mathbf{u}\right),
    \label{eqn:pinv}
\end{equation}
which solves the least-norm problem
\begin{equation}
    \underset{g}{\text{minimize}} \ \norm{g}_2^2 \quad
    \text{subject to} \ (\ref{eqn:con}).
\label{eqn:ln}
\end{equation}
This solution is known as the subspace predictor related to the prediction error method \cite{Huang_2019,Sedghizadeh_2018}. However, this predictor fails to appropriately encode the effects of noise in the data matrices. To the best of our knowledge, there is no existing statistical framework for estimating $g$. In what follows, we will derive a maximum likelihood estimator of $g$. As opposed to existing algorithms, this estimator obtains a statistically optimal data-driven model for systems with noise. Since this model is expressed purely in terms of matrices of signal trajectories, we name this model the \newmodel. For simplicity of exposition, the results in the section are stated for the single-input single-output case, but they seamlessly hold for the multiple-input multiple-output case.

\subsection{Derivation of the Maximum Likelihood Estimator}

In this work, we consider a scenario where the output errors are i.i.d. Gaussian noise for both offline and online data, i.e.,
\begin{equation}
    y_i^d = y_i^{d,0}+w_i^d,\,(w_i^d)_{i=0}^{N-1}\sim \pazocal{N}(0,\sigma^2\mathbb{I}),
\end{equation}
\begin{equation}
    \mathbf{y}_{\text{ini}} = \mathbf{y}_{\text{ini}}^{0}+\mathbf{w}_{p},\,\mathbf{w}_{p}\sim \pazocal{N}(0,\sigma_{p}^2\mathbb{I}).
\end{equation}
Under this noise model, the equations (\ref{eqn:uu}) and (\ref{eqn:up}) still hold exactly, but the past output equation (\ref{eqn:yp}) includes noise on both sides, which leads to a total least squares problem. In this work, the maximum likelihood interpretation of the total least squares problem is used \cite{Markovsky_2007}.

Define
\begin{equation}
    \hat{\mathbf{y}}=\begin{bmatrix}\epsilon_y\\\mathbf{y}\end{bmatrix}=Yg-\begin{bmatrix}\mathbf{y}_{\text{ini}}\\\mathbf{0}\end{bmatrix},
    \label{eqn:yhat}
\end{equation}
where $\epsilon_y:=Y_pg-\mathbf{y}_{\text{ini}}$ is the residual of the past output relation (\ref{eqn:yp}). Then we want to construct an estimator that maximizes the conditional probability of observing the realization $\hat{\mathbf{y}}$ corresponding to the available data given $g$. Applying vectorization on $Yg$ in (\ref{eqn:yhat}), we have
\begin{equation}
    \hat{\mathbf{y}}=\left(g^\mathsf{T}\otimes \mathbb{I}\right)\text{vec}(Y)-\begin{bmatrix}\mathbf{y}_{\text{ini}}\\\mathbf{0}\end{bmatrix},
\end{equation}
where we make use of the property of the Kronecker product
\begin{equation}
    \text{vec}(ABC) = (C^\mathsf{T}\otimes A)\text{vec}(B).
\end{equation}
Denote the noise-free version of $Y_p$ and $Y_f$ by $Y_p^0$ and $Y_f^0$ respectively. Then for a given $g$, we have
\begin{equation}
\begin{aligned}
    \mathbb{E}(\hat{\mathbf{y}}|g)&=\mathbb{E}(Y)g-\begin{bmatrix}\mathbb{E}(\mathbf{y}_{\text{ini}})\\\mathbf{0}\end{bmatrix}=\begin{bmatrix}Y_p^0g-\mathbf{y}_{\text{ini}}^0\\Y_f^0g\end{bmatrix}=\begin{bmatrix}\mathbf{0}\\Y_f^0g\end{bmatrix},\\
    \text{cov}(\hat{\mathbf{y}}|g)&=\left(g^\mathsf{T}\otimes \mathbb{I}\right)\Sigma_{yd}\left(g\otimes \mathbb{I}\right)+\begin{bmatrix}\sigma_p^2\mathbb{I}&\mathbf{0}\\\mathbf{0}&\mathbf{0}\end{bmatrix}=: \Sigma_y,
    \label{eqn:stats}
\end{aligned}
\end{equation}
where $\Sigma_{yd}=\text{cov}\left(\text{vec}(Y)\right)$. According to the noise model of $\left(y_i^d\right)$ and accounting for the Hankel structure of $Y$, we have
\begin{equation}
    \left(\Sigma_{yd}\right)_{i,j}=\begin{cases}
    \sigma^2,&\left(\text{vec}(Y)\right)_i=\left(\text{vec}(Y)\right)_j\\
    0,&\text{otherwise}
    \end{cases}.
    \label{eqn:ry}
\end{equation}
Then, $\Sigma_y$ can be calculated as
\begin{equation}
    \left(\Sigma_y\right)_{i,j}=\sigma^2\sum_{k=1}^{M-|i-j|}g_k g_{k+|i-j|}+\begin{cases}\sigma_p^2,&i=j\leq L_0\\0,&\text{otherwise}\end{cases}.
    \label{eqn:py}
\end{equation}
where $g_k$ denotes the $k$-th entry of $g$. The derivation is given in Appendix~\ref{sec:app1}. Thus, due to the linearity of the normal distribution, we have
\begin{equation}
\hat{\mathbf{y}}|g\sim \pazocal{N}\left(\begin{bmatrix}\mathbf{0}\\Y_f^0g\end{bmatrix},\Sigma_y\right),
\label{eqn:condprob}
\end{equation}
which has the probability density
\begin{equation}
\begin{split}
    p(\hat{\mathbf{y}}|g)=&(2\pi)^{-\frac{L}{2}}\det{(\Sigma_y)}^{-\frac{1}{2}}\\
    &\exp{\left(-\frac{1}{2}\begin{bmatrix}Y_pg-\mathbf{y}_{\text{ini}}\\Y_fg-Y_f^0g\end{bmatrix}^\mathsf{T}\Sigma_y^{-1}\begin{bmatrix}Y_pg-\mathbf{y}_{\text{ini}}\\Y_fg-Y_f^0g\end{bmatrix}\right)}.
\end{split}
    \label{eqn:prob1}
\end{equation}
Note that here the true output data matrix $Y_f^0$ is also unknown, and can be estimated with the maximum likelihood approach.
In this way, we are ready to derive the \newmodel\ by solving the following optimization problem.
\begin{equation}
    \underset{g\in\pazocal{G},Y_f^0}{\text{minimize}}\ -\log p(\hat{\mathbf{y}}|g,Y_f^0),
    \label{eqn:opt1}
\end{equation}
where $\pazocal{G} = \left\{g\in \mathbb{R}^M\left|\,\text{col}\left(U_p,U_f\right)g=\text{col}\left(\mathbf{u}_{\text{ini}},\mathbf{u}\right)\right.\right\}$ is the parameter space defined by the known noise-free input trajectory.

Substituting (\ref{eqn:prob1}) into (\ref{eqn:opt1}), we have the equivalent optimization problem,
\begin{equation}
\begin{split}
    \underset{g\in\pazocal{G},Y_f^0}{\text{minimize}}\ &\text{logdet}(\Sigma_y(g))\\[-1em]
    &+\begin{bmatrix}Y_pg-\mathbf{y}_{\text{ini}}\\Y_fg-Y_f^0g\end{bmatrix}^\mathsf{T}\Sigma_y^{-1}(g)\begin{bmatrix}Y_pg-\mathbf{y}_{\text{ini}}\\Y_fg-Y_f^0g\end{bmatrix}.
\end{split}
\label{eqn:opt2}
\end{equation}
It is easy to see that the optimal value of $Y_f^0$ is $Y_f$ regardless of the choice of $g$. So (\ref{eqn:opt2}) is equivalent to
\begin{equation}
    \underset{g\in\pazocal{G}}{\text{minimize}}\ \text{logdet}(\Sigma_y(g))+\begin{bmatrix}Y_pg-\mathbf{y}_{\text{ini}}\\\mathbf{0}\end{bmatrix}^\mathsf{T}\Sigma_y^{-1}(g)\begin{bmatrix}Y_pg-\mathbf{y}_{\text{ini}}\\\mathbf{0}\end{bmatrix}.
\label{eqn:opt0}
\end{equation}
In this objective function, the first term indicates how accurate the output estimates are. The second term represents how much the estimate deviates from the past output observations.

\subsection{Iterative Computation of the Estimator}

To find a computationally efficient algorithm to solve (\ref{eqn:opt0}), we relax the problem and solve it with sequential quadratic programming (SQP) \cite{boggs1995sequential}. First, the covariance matrix $\Sigma_y$ is approximated with its diagonal part, denoted by $\bar{\Sigma}_y$, i.e.,
\begin{equation}
    \left(\bar{\Sigma}_y\right)_{i,j} = \begin{cases}
    \left(\Sigma_y\right)_{i,j},&i=j\\
    0,&i\neq j
    \end{cases}.
\end{equation}

\begin{remark}
    This approximation holds exactly when the data matrices are constructed by truncating $(u_i^d, y_i^d)_{i=0}^{N-1}$ into sections of length $L$ with no overlap, or using multiple independent trajectories of length $L$, instead of forming Hankel structures. This construction is known as the Page matrix \cite{damen1982approximate} and it was shown in \cite{vanWaarde_2020} that similar results to Theorem~\ref{thm:1} still hold for Page matrices. The Hankel construction is able to use the data more efficiently, but leads to complex noise correlation, which is reflected in the non-diagonal structure of $\Sigma_y$. The comparison between the Hankel construction and the Page construction is, however, beyond the scope of this paper. See \cite{IM-FD,iannelli2021design} for more on this topic.
\end{remark}

\begin{remark}
    This approximation gives an upper bound on the log-det terms. According to Hadamard's inequality, since $\Sigma_y\in \mathbb{S}_{++}^{L}$, we have $\mathrm{logdet}(\Sigma_y(g))\leq\mathrm{logdet}(\bar{\Sigma}_y(g))$.
\end{remark}

\vspace{1em}
In this way, problem (\ref{eqn:opt0}) is approximated as
\begin{equation}
\begin{split}
    \underset{g\in\pazocal{G}}{\text{minimize}}\ L'\log&\left(\norm{g}_2^2\right)+L_0\log\left(\sigma^2\norm{g}_2^2+\sigma_p^2\right)\\
    &\quad\quad\ +\dfrac{1}{\sigma^2\norm{g}_2^2+\sigma_p^2}\norm{Y_pg-\mathbf{y}_{\text{ini}}}_2^2.
\end{split}
\label{eqn:optapp}
\end{equation}
This problem can be readily solved by SQP. For each iteration, the following quadratic programming problem is solved.
\begin{equation}
\begin{matrix}
    \quad\quad g^{k+1}=&
    \text{arg}\underset{g}{\text{min}} &    \lambda(g^k)\norm{g}_2^2+\norm{Y_pg-\mathbf{y}_{\text{ini}}}_2^2 \\
    &\text{subject to} & \begin{bmatrix}
    U_p\\U_f
    \end{bmatrix}g=\begin{bmatrix}
    \mathbf{u}_{\text{ini}}\\\mathbf{u}
    \end{bmatrix},
\end{matrix}
\label{eqn:optsqp}
\end{equation}
where $\lambda(g^k) = L'\sigma_p^2\big{/}\norm{g^k}_2^2+L \sigma^2$.
The objective function in (\ref{eqn:optapp}) is approximated by a quadratic function around $g^k$, making use of the local expansion $\log x\approx \log x_0 + \frac{1}{x_0}(x-x_0)$. The optimality conditions of (\ref{eqn:optsqp}) are: 
\begin{equation}
    \begin{bmatrix}
    F(g^k)&U^\mathsf{T}\\
    U&\mathbf{0}
    \end{bmatrix}\begin{bmatrix}g^{k+1}\\\nu^{k+1}\end{bmatrix}=
    \begin{bmatrix}Y_p^\mathsf{T}\mathbf{y}_{\text{ini}}\\\tilde{\mathbf{u}}\end{bmatrix},
\end{equation}
where $\tilde{\mathbf{u}}=\text{col}(\mathbf{u}_{\text{ini}},\mathbf{u})$, $F(g^k)=\lambda(g^k)\mathbb{I}+Y_p^\mathsf{T}Y_p$, and $\nu_{k+1}\in\mathbb{R}^L$ is the Lagrange multiplier. The closed-form solution is thus given by
\begin{equation}
    \begin{aligned}
    g^{k+1} &= \left(F^{-1}-F^{-1}U^\mathsf{T}(U F^{-1}U^\mathsf{T})^{-1}UF^{-1}\right)Y_p^\mathsf{T}\mathbf{y}_{\text{ini}}\\&\quad\quad\quad\quad\quad\quad\quad\quad\quad\quad+F^{-1}U^\mathsf{T}(U F^{-1}U^\mathsf{T})^{-1}\tilde{\mathbf{u}}\\
    &=:\pazocal{P}(g^k)\mathbf{y}_{\text{ini}}+\pazocal{Q}(g^k)\tilde{\mathbf{u}}.
    \end{aligned}
    \label{eqn:clsol}
\end{equation}
This algorithm converges to a local minimum of problem (\ref{eqn:optapp}). 

\begin{remark}
    Following similar derivations, this algorithm can be extended to the case where i.i.d. Gaussian input errors also exist in offline and online data,, which leads to an additional input regularization term $\norm{Ug-\tilde{\mathbf{u}}}_2^2$ in the iterative algorithm.
\end{remark}
\subsection{Maximum Likelihood Data-Driven Simulation}

Based on the derived maximum likelihood estimator of $g$, the step of solving the linear system (\ref{eqn:con}) in Algorithm~\ref{al:1} can be replaced by solving the SQP problem (\ref{eqn:optsqp}). For simulation, the SQP problem can be initialized at the pseudoinverse solution $g_\text{pinv}$ (\ref{eqn:pinv}). This leads to the following algorithm for maximum likelihood data-driven simulation.

\begin{algorithm}[htb]
	\caption{Maximum likelihood data-driven simulation: the \newmodel}
	\begin{algorithmic}[1]
	\State \textbf{Given: }$U_p,U_f,Y_p,Y_f,\sigma,\sigma_p,\epsilon$.
	\State \textbf{Input: }$\mathbf{u}_{\text{ini}},\mathbf{y}_{\text{ini}},\mathbf{u}$.
	\State $k \leftarrow 0$, $g^0 \leftarrow g_{\text{pinv}}$ from (\ref{eqn:pinv})
      \Repeat
        \State Calculate $g^{k+1}$ with (\ref{eqn:clsol}).
        \State $k \leftarrow k+1$
      \Until{\norm{g^k-g^{k-1}}<\epsilon\norm{g^{k-1}}}
	\State \textbf{Output: }$g_{\text{\newacro}}=g^k$, $\mathbf{y}=Y_f g^k$.
	\end{algorithmic}
	\label{al:2}
\end{algorithm}

This algorithm gives the \newmodel\ in the form of (\ref{eqn:ddmodel}). The approximate maximum likelihood estimator (\ref{eqn:optsqp}) has the same $\norm{g}_2^2$-penalization term as the least-norm problem (\ref{eqn:ln}). However, the estimate $g_{\text{\newacro}}$ does not lie in the solution space of the underdetermined system (\ref{eqn:con}). The total least squares structure in (\ref{eqn:yp}) leads to the penalization term $\norm{Y_pg-\mathbf{y}_{\text{ini}}}_2^2$ in place of the hard constraint in (\ref{eqn:con}).

\section{Practical Aspects and Analysis of the \Newmodel}
\label{sec:4.5}

In this section, we first discuss practical scenarios where the dimension of the signal matrix is very large and the noise level information required in formulating the \newmodel\ is unknown. Then, the effectiveness of the proposed maximum likelihood framework is analyzed by numerical comparison and covariance analysis.

\subsection{Preconditioning of Data Matrices}
\label{sec:precon}
In data-driven applications, it is usually assumed that abundant data are available, i.e., $N\gg L$. Under this scenario, the dimension of the parameter vector $g\in \mathbb{R}^M$, which needs to be optimized online, would be much larger than the length of the predicted output trajectory. This leads to high online computational complexity even to estimate a very short trajectory. On the other hand, only $2L$ independent basis vectors are needed to describe all the possible input-output trajectories of length $L$. It is possible to precondition the data matrices such that only $2L$ basis trajectories are used. 

To do this, we propose the following strategy based on the SVD to compress the data such that the dimension of the parameter vector $g$ is $2L$ regardless of the raw data length. Let
$
    \text{col}(U,Y)=WSV^\mathsf{T}\in\mathbb{R}^{2L\times M}
$
be the SVD of the data matrix. Define the compressed data matrices $\tilde{U}_p,\tilde{Y}_p\in\mathbb{R}^{L_0\times 2L}$ and $\tilde{U}_f,\tilde{Y}_f\in\mathbb{R}^{L'\times 2L}$ such that
\begin{equation}
    \text{col}\left(\tilde{U}_p,\tilde{U}_f,\tilde{Y}_p,\tilde{Y}_f\right)=WS_{2L}\in\mathbb{R}^{2L\times 2L},
\end{equation}
where $S_{2L}$ is the first $2L$ columns of $S$.

It is shown in the following proposition that Algorithm~\ref{al:2} with compressed data matrices obtains exactly the same output trajectory $\mathbf{y}$ as with raw data matrices.

\begin{prop}
Let the simulated trajectories with data matrices $(U_p, Y_p, U_f, Y_f)$ and $(\tilde{U}_p, \tilde{Y}_p, \tilde{U}_f, \tilde{Y}_f)$ from Algorithm~\ref{al:2} be $\mathbf{y}$ and $\tilde{\mathbf{y}}$ respectively. Then we have $\tilde{\mathbf{y}}=\mathbf{y}$.
\label{prop:1}
\end{prop}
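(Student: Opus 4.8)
The plan is to exhibit a change of variables that maps the raw online problem exactly onto the compressed one, so that the two runs of Algorithm~\ref{al:2} produce identical output trajectories. The starting point is the observation that $\text{col}(U,Y)\in\mathbb{R}^{2L\times M}$ has at most $2L$ nonzero singular values, so its SVD truncates exactly: writing $V_{2L}$ for the first $2L$ columns of $V$ (so that $V_{2L}^\mathsf{T}V_{2L}=\mathbb{I}_{2L}$), one has $WSV^\mathsf{T}=WS_{2L}V_{2L}^\mathsf{T}$. Since the row-block partition into $U_p,U_f,Y_p,Y_f$ is preserved under left multiplication, this yields the factorizations $U_p=\tilde{U}_pV_{2L}^\mathsf{T}$, $U_f=\tilde{U}_fV_{2L}^\mathsf{T}$, $Y_p=\tilde{Y}_pV_{2L}^\mathsf{T}$, and $Y_f=\tilde{Y}_fV_{2L}^\mathsf{T}$. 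This single identity is the engine of the whole argument.

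Next I would introduce the reduced variable $\tilde{g}:=V_{2L}^\mathsf{T}g$ and track how each ingredient of the SQP subproblem (\ref{eqn:optsqp}) transforms. Using the factorization, the equality constraint $\text{col}(U_p,U_f)g=\tilde{\mathbf{u}}$ becomes $\text{col}(\tilde{U}_p,\tilde{U}_f)\tilde{g}=\tilde{\mathbf{u}}$, and the data-fit residual satisfies $Y_pg-\mathbf{y}_{\text{ini}}=\tilde{Y}_p\tilde{g}-\mathbf{y}_{\text{ini}}$; thus both terms depend on $g$ only through $\tilde{g}$. For the regularization term I would decompose $g=V_{2L}\tilde{g}+g_\perp$ with $g_\perp$ orthogonal to the columns of $V_{2L}$, giving $\norm{g}_2^2=\norm{\tilde{g}}_2^2+\norm{g_\perp}_2^2$. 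Because $\lambda(g^k)>0$, the minimizer of (\ref{eqn:optsqp}) must have $g_\perp=0$, so $g^{k+1}=V_{2L}\tilde{g}^{k+1}$ and, crucially, $\norm{g^{k+1}}_2=\norm{\tilde{g}^{k+1}}_2$.

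I would then close the loop by induction on the iteration index. The weight $\lambda$ depends on the current iterate only through $\norm{g^k}_2^2$, which equals $\norm{\tilde{g}^k}_2^2$ by the previous step, so the raw and compressed subproblems use the same $\lambda$ at every step and reduce to one another under $g\mapsto V_{2L}^\mathsf{T}g$. For the base case I would show that the initializer $g^0=g_{\text{pinv}}$ lies in the row space of $\text{col}(U_p,Y_p,U_f)$, which the factorization places inside the column space of $V_{2L}$; hence $g^0=V_{2L}\tilde{g}^0$ with $\tilde{g}^0$ the pseudoinverse initializer of the compressed problem. Once $\tilde{g}^k=V_{2L}^\mathsf{T}g^k$ holds for all $k$, the termination points correspond and the output follows immediately: $\mathbf{y}=Y_fg^\star=\tilde{Y}_fV_{2L}^\mathsf{T}g^\star=\tilde{Y}_f\tilde{g}^\star=\tilde{\mathbf{y}}$.

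I expect the main obstacle to be the bookkeeping that keeps the induction tight rather than any deep difficulty: one must verify at every iterate that the orthogonal component $g_\perp$ genuinely vanishes, so that the norm identity $\norm{g^k}_2=\norm{\tilde{g}^k}_2$ propagates and keeps the two $\lambda$ sequences synchronized, and that the pseudoinverse initialization indeed lands in the column space of $V_{2L}$. Everything else is linear algebra driven by the single factorization established in the first step.
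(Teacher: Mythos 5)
Your proposal is correct and takes essentially the same approach as the paper: the factorization of each raw data matrix as its compressed counterpart times $V_{2L}^\mathsf{T}$, the base case $g_{\text{pinv}}=V_{2L}\,\tilde{g}_{\text{pinv}}$, and an induction propagating $g^{k}=V_{2L}\,\tilde{g}^{k}$ so that $\norm{g^k}_2=\norm{\tilde{g}^k}_2$ keeps the two $\lambda$-sequences synchronized. Your orthogonal-decomposition step ($g=V_{2L}\tilde{g}+g_\perp$ with the regularizer forcing $g_\perp=0$ at the minimizer) simply makes explicit what the paper asserts from ``the orthogonality of $V$ and the sparsity structure'' of the transformed matrices.
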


\begin{proof} Define transformed data matrices $\bar{U}_p$, $\bar{Y}_p$, $\bar{U}_f$, and $\bar{Y}_f$ such that $\text{col}\left(\bar{U}_p, \bar{Y}_p, \bar{U}_f, \bar{Y}_f\right)=WS$.
Then the relations between the data matrices are given by
\begin{equation}
\begin{aligned}
    \text{col}\left(U_p, Y_p, U_f, Y_f\right)&=\text{col}\left(\tilde{U}_p,\tilde{U}_f,\tilde{Y}_p,\tilde{Y}_f\right)V_{2L}^\mathsf{T},\\
    \text{col}\left(U_p, Y_p, U_f, Y_f\right)&=\text{col}\left(\bar{U}_p, \bar{Y}_p, \bar{U}_f, \bar{Y}_f\right)V^\mathsf{T},\\
    \text{col}\left(\bar{U}_p, \bar{Y}_p, \bar{U}_f, \bar{Y}_f\right)&=\left[\text{col}\left(\tilde{U}_p,\tilde{U}_f,\tilde{Y}_p,\tilde{Y}_f\right)\ \mathbf{0}\right].
\end{aligned}
\end{equation}
where $V_{2L}$ denotes the first $2L$ columns of $V$. 

Denote the variables with the compressed data matrices by a tilde, and the variables with the transformed data matrices by a bar. Since $V_{2L}^\mathsf{T}V_{2L}=\mathbb{I}$, we have $g_{\text{pinv}}=V_{2L}\,\tilde{g}_{\text{pinv}}$.
This leads to $\norm{g_{\text{pinv}}}_2^2=\norm{\tilde{g}_{\text{pinv}}}_2^2$, and thus $\lambda(g^0)=\lambda(\tilde{g}^{0})$.

Suppose at the $k$-th iteration, $\lambda(g^k)=\lambda(\tilde{g}^{k})$. Due to the orthogonality of $V$ and the sparsity structure of $\bar{U}$ and $\bar{Y}_p$, we have $g^{k+1} = V\bar{g}^{k+1},\,\bar{g}^{k+1}=\text{col}\left(\tilde{g}^{k+1},\mathbf{0}\right)$.
This leads to
\begin{equation}
    g^{k+1} = V_{2L}\,\tilde{g}^{k+1},\,\norm{g^{k+1}}_2^2=\norm{\tilde{g}^{k+1}}_2^2.
    \label{eqn:proofnorm}
\end{equation}
Thus for all $k$, we have $g^{k} = V_{2L}\,\tilde{g}^{k}$. Therefore, the simulated trajectory satisfies $\mathbf{y}=Y_f g^k=\tilde{Y}_f \tilde{g}^{k}=\tilde{\mathbf{y}}$.
\end{proof}

\begin{remark}
    It can be seen from the proof that $\bar{\Sigma}_y(g)=\bar{\Sigma}_y(\tilde{g})$. So with compressed data matrices, the output trajectory estimate has the same covariance as the raw data matrices when Page matrices are used, and the same diagonal components when Hankel matrices are used.
\end{remark}


\subsection{Data-driven Noise Level Estimation}
\label{sec:sigest}

When the noise level $\sigma^2$ in the output signal matrix $Y$ is unknown, it can be directly estimated from the singular values of a projected signal matrix. In detail, let $\Pi_U^\perp=\mathbb{I}-U^\top(UU^\top)^{-1}U$ span the null space of the noise-free matrix $U$, and $Y^0$ be the noise-free version of $Y$. Then according to the persistency of excitation requirement and Theorem~\ref{thm:1}(c), $\text{rank}(Y^0 \Pi_U^\perp)=n_x$, and thus $Y\Pi_U^\perp$ is a perturbed rank-$n_x$ matrix. We apply the robust noise level estimator for perturbed low-rank matrices presented in Section~III-E of \cite{Gavish_2014}
\begin{equation}
    \hat{\sigma}^2=\frac{s_\text{med}^2}{M\mu(L/M)},
    \label{eqn:sigest}
\end{equation}
where $s_\text{med}$ is the median of the singular values of $Y\Pi_U^\perp$ and $\mu(\beta)$ is the median of the Marchenko-Pastur distribution with aspect ratio $\beta$.
This estimator compares the perturbed singular values with the ideal distribution of the noise singular values to estimate $\sigma^2$. It is applicable when $s_\text{med}$ comes purely from noise, i.e., $L>2n_x$.

The noise level of online data $\sigma_p^2$ can be set to zero when initial conditions are known exactly, or to $\sigma^2$ when the same sensor is used for offline and online measurements. Otherwise, online measurements can be taken beforehand and $\sigma_p^2$ can be estimated similarly to $\sigma^2$.

\subsection{Comparison of Data-Driven Predictors}
\label{sec:comp}
The performance of Algorithm~\ref{al:2} is analyzed numerically by comparing the accuracy of the simulated output $\mathbf{y}$ measured by fitting metric
\begin{equation}
    W=100\cdot \left(1-\left[\frac{\sum_{i=1}^{L'}(y_i-\hat{y}_i)^2}{\sum_{i=1}^{L'}(y_i-\bar{y})^2}\right]^{1/2}\right),
    \label{eq:W}
\end{equation}
where $y_i$ are the true outputs, $\hat{y}_i$ are the estimated outputs, and $\bar{y}$ is the mean of the true outputs. We compare 1) \textit{pinv}: the least-norm solution (\ref{eqn:pinv}), 2) \textit{exact}: the SQP solution of problem (\ref{eqn:opt0}) initialized at $g_{\text{pinv}}$, 3) \textit{SMM-1}: the solution after one iteration of Algorithm~\ref{al:2}, and 4) \textit{SMM}: Algorithm~\ref{al:2}.

Consider random single-input single-output systems with state dimensions between 2 and 10 (generated by \textsc{Matlab} function \texttt{drss}). The following parameters are used: $L_0=n_x$, $L'=10$. Inputs for the identification data $(u_i^d)_{i=0}^{N-1}$ and simulation conditions $\mathbf{u}_{\text{ini}}$, $\mathbf{y}_{\text{ini}}$, $\mathbf{u}$ are all unit i.i.d. Gaussian. For each analysis, 100 Monte Carlo simulations are conducted.

The simulation accuracy of different MLE algorithms are plotted in Figure~\ref{fig:0.1}(a) for different data sizes $N$. It can be seen that for small data sizes, the \textit{exact} estimate obtains very similar performance to the \newacro\ estimates. This indicates that the approximate solution obtains a close match to the original MLE problem. For larger data sizes, due to the increasing dimension of $g$, the performance of \textit{exact}, where the data compression scheme does not apply, becomes worse. On the other hand, Algorithm~\ref{al:2} converges very quickly as the one-iteration solution \textit{SMM-1} obtains almost identical performance to the converged solution \textit{SMM} at all data sizes.

\begin{figure}[htb]
\centering
\begin{tabular}{ c @{\hspace{5pt}} c }
\includegraphics[width=1.65in]{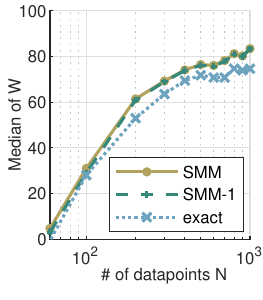} &
\includegraphics[width=1.65in]{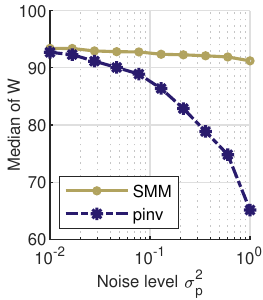} \\[-0.5em]
\footnotesize (a) $\sigma^2=\sigma_p^2=1$&
\footnotesize (b) $N=100$, $\sigma^2=0.01$
\end{tabular}
\caption{Comparison of simulation accuracy with different data-driven predictors.}
\label{fig:0.1}
\end{figure}

The \textit{SMM} estimate is compared against \textit{pinv} in Figure~\ref{fig:0.1}(b) for different online noise levels $\sigma_p^2$. It is showcased that \textit{SMM} is more accurate than \textit{pinv} due to the inclusion of the correct noise model. In particular, this performance improvement is more significant when $\sigma_p^2$ is large. In order to assess the general validity of the results shown in Figure~\ref{fig:0.1}(b), it is demonstrated theoretically in the following proposition that the signal matrix model obtains a smaller covariance than the least-norm solution when noise is present only in $\mathbf{y}_\text{ini}$.
\begin{prop}
Let $g_{\text{pinv}}$ and $g_{\text{\newacro}}$ be the estimates from the least-norm solution (\ref{eqn:pinv}) and Algorithm~\ref{al:2} respectively. When $\sigma^2=0$, we have $\text{tr}(\text{cov}(g_{\text{\newacro}}))<\text{tr}(\text{cov}(g_{\text{pinv}}))$.
\label{prop:2}
\end{prop}
\begin{proof}
    See Appendix~\ref{sec:app2}.
\end{proof}

\section{Impulse Response Estimation With the \Newmodel}
\label{sec:5}

We propose here a system identification method that identifies an FIR model of the system by \newmodel\ simulation. Numerical tests show that model fitting is improved compared to the conventional least-squares estimate, when the truncation error is large or the input history is unknown.

\subsection{Impulse Response Estimation}

Consider the problem of estimating the impulse response model $(h_i)_{i=0}^\infty$ of a system from data. The output $y_t$ is given by
\begin{equation}
y_t = \sum^\infty_{i=0} h_i u_{t-i}.
\end{equation}
The conventional approach is to formulate a linear regression to estimate a finite truncation of the impulse response
\begin{equation}
    \underbrace{\begin{bmatrix}y^d_0\\y^d_1\\\vdots\\y^d_{N-1}\end{bmatrix}}_{y_N}=\underbrace{\begin{bmatrix}u^d_0&u^d_{-1}&\cdots&u^d_{1-n}\\u^d_1&u^d_0&\cdots&u^d_{2-n}\\\vdots&\vdots&\ddots&\vdots\\u^d_{N-1}&u^d_{N-2}&\cdots&u^d_{N-n}\end{bmatrix}}_{\Phi_N}\underbrace{\begin{bmatrix}h_0\\h_1\\\vdots\\h_{n-1}\end{bmatrix}}_{h},
\end{equation}
where $n$ is the length of the impulse response to be estimated. The regression problem can then be solved by least squares with the closed-form solution
\begin{equation}
    \hat{h}_\text{LS} = \left(\Phi_N^\mathsf{T}\Phi_N\right)^{-1}\Phi_N^\mathsf{T}y_N.
    \label{eqn:LS}
\end{equation}
There are two main assumptions underlying this formulation: 1) the truncation error of the finite impulse response is negligible, i.e., $h_i\approx 0$ for all $i\geq n$; and 2) additional input measurements $(u_i^d)_{i=1-n}^{-1}$ are available. With these two assumptions, the least-squares solution is known to be the best unbiased estimator with i.i.d. Gaussian output noise \cite{LjungBook2}.

However, these assumptions may not be satisfied in practice. When the internal dynamics matrix $A$ has a large condition number, a very long impulse response sequence is needed to remove the truncation error even for a low-order system. In this case, the least-squares algorithm may become impractical due to limited data length and/or computation power. If the truncation error is not negligible, the estimator is not correct, i.e., in the noise-free case, the estimate does not coincide with the true system. When the input history is unknown, the first $(n-1)$ input measurements have to be used solely for initial condition estimation, in which case the data efficiency is substantially affected when a large $n$ is needed.

In this work, we propose using the \newmodel\ to estimate the impulse response by finding the length-$n$ response to a pulse input from zero initial conditions, i.e.,
\begin{equation}
    \mathbf{u}_{\text{ini}}=\mathbf{0},\,\mathbf{y}_{\text{ini}}=\mathbf{0},\,\mathbf{u}=\text{col}(1,\mathbf{0}),\,L'=n.
    \label{eqn:ddimp}
\end{equation}
Since the initial condition is known exactly, we have $\sigma_p=0$. Then the output trajectory $\mathbf{y}$ is an estimate of the impulse response $h$ of length $n$ of the system \cite{Markovsky_2005b}. This approach requires neither of the assumptions for the least-squares method. Instead of requiring a length-$(n-1)$ input history sequence, this approach only uses the first $L_0$ entries of the data to estimate the initial condition. In fact, the estimator is correct and unbiased for an arbitrary length $n$ and unknown input history as shown in Theorem~\ref{thm:1}(d), as long as the persistency of excitation condition is satisfied.

\subsection{Numerical Results}

In this subsection, the proposed algorithm is tested against the least-squares estimate by applying it to numerical examples.
We compare the proposed \newmodel\ estimate \textit{\newacro} (Algorithm~\ref{al:2} with (\ref{eqn:ddimp})) with the least-squares estimate \textit{LS} (\ref{eqn:LS}). The parameters used in the simulation are 
$
    N=50,\,L_0=4,\,n=L'=11,\,\sigma^2=0.01.
$
In \textit{\newacro}, the noise level $\sigma^2$ is estimated using (\ref{eqn:sigest}). The identification data are generated with unit i.i.d. Gaussian input signals. For each case, 1000 Monte Carlo simulations are conducted.

In the first example, we consider the following fourth-order LTI system tested in \cite{Pillonetto_2010}
\begin{equation}
    G_1(z) = \dfrac{0.1159(z^3+0.5z)}{z^4-2.2z^3+2.42z^2-1.87z+0.7225}.
    \label{eqn:sys1}
\end{equation}
This system is relatively slow. The truncation error is significant when $n=11$ is selected. First, the \textit{LS} and \textit{\newacro} algorithms are compared under the noise-free case, and the results are shown in Figure~\ref{fig:1}(a). It can be clearly seen that \textit{LS} is not correct due to the presence of truncation errors, whereas the \newacro\ estimator is correct. When the noise is present, the \textit{LS} and \textit{\newacro} algorithms are compared in Figure~\ref{fig:1}(b). The \textit{\newacro} estimator have smaller variance compared to \textit{LS}.

\begin{figure}[htb]
\centerline{\includegraphics[width=\columnwidth]{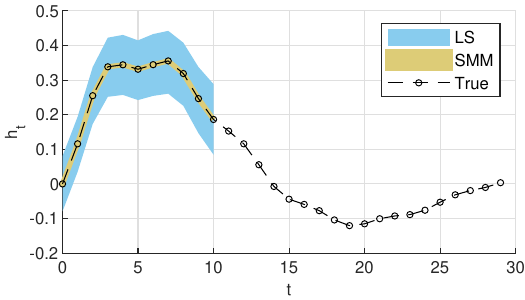}}\vspace{-0.5em}
\centerline{\footnotesize (a) Noise-free case}
\centerline{\includegraphics[width=\columnwidth]{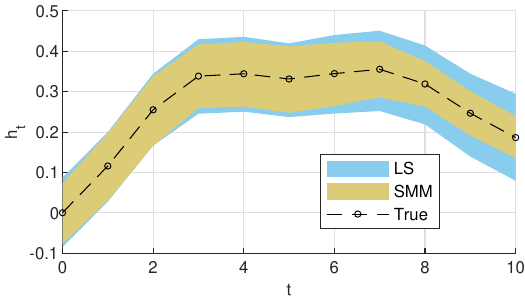}}\vspace{-0.5em}
\centerline{\footnotesize (b) Noisy case with $\sigma^2=0.01$}
\caption{Comparison of impulse response estimation with truncation errors. Colored area: estimates within two standard deviations.}
\label{fig:1}
\end{figure}

In the second example, we focus on the effect of unknown input history by investigating a faster LTI system used in \cite{Pillonetto_2010}
\begin{equation}
    G_2(z) = \dfrac{0.9183z}{z^2+0.24z+0.36}.
\end{equation}
In this case, the truncation error is already negligible at $n=11$, but we assume the input history is unknown. The results of the estimation are illustrated in Figure~\ref{fig:2}. The result of the \textit{\newacro} algorithm is shown to be more accurate than the \textit{LS} algorithm, especially for the first four coefficients.

\begin{figure}[htb]
\centerline{\includegraphics[width=\columnwidth]{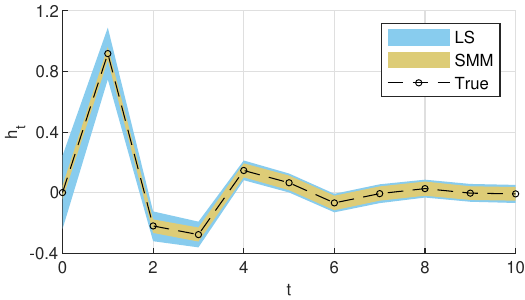}}\vspace{-0.5em}
\caption{Comparison of impulse response estimation with unknown input history. Colored area: estimates within two standard deviations.}
\label{fig:2}
\end{figure}

To quantitatively assess the performance of different algorithms, we quantify the model fitting by the metric $W$ (\ref{eq:W}) with impulse response estimates. The boxplots of model fitting for both examples are plotted in Figure~\ref{fig:3}. For comparison, the case with known input history is also plotted for example 2. The \textit{\newacro} algorithm performs better than the \textit{LS} algorithm when the truncation error is large or the input history is unknown. In example 1, the \textit{LS} model fitting is similar for the noisy and noise-free cases, which indicates that the truncation error is the main source of error here. However, when both assumptions of the least squares are satisfied, \textit{LS} performs slightly better than \textit{\newacro}. This is due to the fact that part of the data is used to estimate the initial condition in Algorithm~\ref{al:2}, whereas it is known for the \textit{LS} algorithm.

\begin{figure}[htb]
\centering
  \begin{tabular}{ c @{\hspace{5pt}} c }
    \includegraphics[width=1.65in]{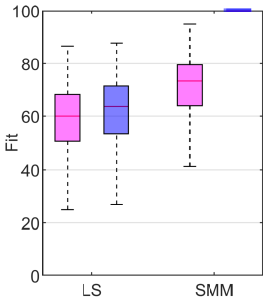} &
    \includegraphics[width=1.65in]{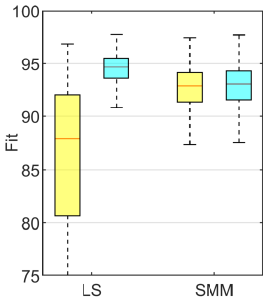} \\
    \footnotesize (a) Example 1&
    \footnotesize (b) Example 2
  \end{tabular}
\caption{Box plots of model fitting for both examples with 1000 simulations. In (a), magenta: noisy data, blue: noise-free data. In (b), yellow: unknown input history, cyan: known input history.}
\label{fig:3}
\end{figure}

\section{Data-Driven Predictive Control With the Signal Matrix Model}
\label{sec:6}

In this section, the \newmodel\ is used as the predictor in receding horizon predictive control. As discussed in Section~\ref{sec:4}, the predictor in the unregularized DeePC problem (\ref{eqn:deepc0}) becomes ill-conditioned when noise is present. In the following subsections, we will present two existing methods to remedy this problem, and compare the control performance with the optimal predictor proposed in this paper.

\subsection{Pseudoinverse and Regularized Algorithms}

There are mainly two types of existing algorithms to extend (\ref{eqn:deepc0}) to the noisy case: the data-driven subspace predictive control and the regularized DeePC algorithm.

The subspace predictive control approach \cite{Sedghizadeh_2018} uses the pseudoinverse solution of $g$ in the predictor instead of the underdetermined linear equality constraints as follows
\begin{equation}
\begin{aligned}
    \underset{\mathbf{u},\mathbf{y}}{\text{minimize}} & \quad\quad\quad\quad\quad\quad\quad J_\text{ctr}(\mathbf{u},\mathbf{y})\\
    \text{subject to} &\quad \mathbf{y}=Y_f\, g_\text{pinv}(\mathbf{u;\mathbf{u}_{\text{ini}},\mathbf{y}_{\text{ini}}}),\mathbf{u} \in \pazocal{U}, \mathbf{y} \in \pazocal{Y},
\end{aligned}
\label{eqn:subpc}
\end{equation}
where $g_\text{pinv}(\cdot)$ is defined in (\ref{eqn:pinv}). Multiple applications have been studied with similar algorithms (e.g., \cite{Hallouzi_2008,Kadali_2003}). However, as discussed in Section~\ref{sec:4}, $g_\text{pinv}(\cdot)$ is not guaranteed to be an effective choice of $g$ for systems with noise.

The regularized DeePC algorithm \cite{Coulson_2019} adds additional regularization terms in the objective in order to regularize both the norm of $g$ and the slack variable needed to satisfy (\ref{eqn:yp}):
\begin{equation}
\begin{aligned}
    \underset{\mathbf{u},\mathbf{y},g,\hat{\mathbf{y}}_{\text{ini}}}{\text{minimize}} &  \quad  J_\text{ctr}(\mathbf{u},\mathbf{y})+\lambda_g\norm{g}_p^p+\lambda_y\norm{\hat{\mathbf{y}}_{\text{ini}}-\mathbf{y}_{\text{ini}}}_p^p \\
    \text{subject to} &\quad\text{col}\left(
    U_p,Y_p,U_f,Y_f
    \right)g=\text{col}\left(
    \mathbf{u}_{\text{ini}},\hat{\mathbf{y}}_{\text{ini}},\mathbf{u},\mathbf{y}
    \right),\\
    &\quad\mathbf{u} \in \pazocal{U}, \mathbf{y} \in \pazocal{Y},
\end{aligned}
\label{eqn:deepc}
\end{equation}
where $p$ is usually selected as 1 or 2. This algorithm can be interpreted as an MPC algorithm acting on the implicit parametric model structure (\ref{eqn:uy}) and (\ref{eqn:uyp}), where the objective is a trade-off between the control performance objective $J_{\text{ctl}}$ and the parameter estimation objective
\begin{equation}
    J_{\text{id,DeePC}}:=\lambda\norm{g}_p^p+\norm{\hat{\mathbf{y}}_{\text{ini}}-\mathbf{y}_{\text{ini}}}_p^p,\,\lambda = \lambda_g/\lambda_y.
    \label{eqn:iddeepc}
\end{equation}
The set of underdetermined model parameters $(g,\hat{\mathbf{y}}_{\text{ini}})$ are then estimated adaptively in the MPC algorithm. The estimated trajectory in this algorithm is not associated with a fixed input-output mapping, but is biased towards those that predict better control performance. This algorithm is also shown to be effective in multiple applications (e.g., \cite{Coulson_2019_reg,Huang_2019}). However, 
tuning of the regularization parameters is a very hard problem. To the best of our knowledge, there is no practical approach proposed to tune $\lambda_g$ and $\lambda_y$ beforehand, and unfortunately the control performance is known to be very sensitive to the regularization parameters \cite{Huang_2019}.

\begin{remark}
    The same data compression scheme as discussed in \ref{sec:precon} is applicable to these two algorithms as well.
\end{remark}

\subsection{An Optimal Tuning-Free Approach}

To address the concerns regarding the two existing methods discussed in the previous subsection, we propose a receding horizon predictive control scheme with the \newmodel\ as the predictor. This directly leads to 
\begin{equation}
\begin{aligned}
    \underset{\mathbf{u},\mathbf{y}}{\text{minimize}} & \quad\quad\quad\quad\quad\quad\quad J_\text{ctr}(\mathbf{u},\mathbf{y})\\
    \text{subject to} &\quad \mathbf{y}=Y_f\, g_\text{\newacro}(\mathbf{u;\mathbf{u}_{\text{ini}},\mathbf{y}_{\text{ini}}}),\mathbf{u} \in \pazocal{U}, \mathbf{y} \in \pazocal{Y},
\end{aligned}
\label{eqn:mlepc0}
\end{equation}
where $g_\text{\newacro}(\cdot)$ is obtained by Algorithm~\ref{al:2}. However, unlike the pseudoinverse predictor where $g_{\text{pinv}}(\cdot)$ is linear with respect to $\mathbf{u}$, the maximum likelihood predictor $g_\text{\newacro}(\cdot)$ involves an iterative algorithm which cannot be expressed as an equality constraint explicitly.

To solve this problem, we notice that the $l_2$-norm of $g$ does not change much throughout the receding horizon control, and the algorithm is only iterative with respect to $\norm{g}_2^2$. So in a receding horizon context, it makes sense to warm-start the optimization problem from the $\norm{g}_2^2$-value at the previous time instant. Then, $g_{\text{\newacro}}(\cdot)$ can be closely approximated by the solution of (\ref{eqn:clsol}) after the first iteration, i.e.,
\begin{equation}
    g^t(\mathbf{u;\mathbf{u}_{\text{ini}},\mathbf{y}_{\text{ini}}},g^{t-1})=\pazocal{P}(g^{t-1})\,\mathbf{y}_{\text{ini}}+\pazocal{Q}(g^{t-1})\,\tilde{\mathbf{u}},
\end{equation}
where, with an abuse of notation, $g^t(\cdot)$ denotes the approximation of $g_{\text{\newacro}}(\cdot)$ with one iteration at time instant $t$. In this way, the \newacro\ predictor is approximated by a linear equality constraint that can be efficiently solved within a quadratic program. Thus, the proposed approach solves the following optimization problem at each time step
\begin{equation}
\begin{aligned}
    \underset{\mathbf{u},\mathbf{y}}{\text{minimize}} &\quad J_\text{ctr}(\mathbf{u},\mathbf{y})\\
    \text{subject to} &\quad g^t=\pazocal{P}(g^{t-1})\,\mathbf{y}_{\text{ini}}+\pazocal{Q}(g^{t-1})\,\tilde{\mathbf{u}},\\&\quad\mathbf{y}=Y_f\, g^t,\mathbf{u} \in \pazocal{U}, \mathbf{y} \in \pazocal{Y}.
\end{aligned}
\label{eqn:mlepc}
\end{equation}

The parameter estimation part (\ref{eqn:iddeepc}) in regularized DeePC has the same form as the maximum likelihood estimator (\ref{eqn:optsqp}) with $p=2$, which leads to the predictor in (\ref{eqn:mlepc}). However, our proposed method 
isolates the parameter estimation part from the control performance objective. 
More importantly, the problem of hyperparameter tuning is avoided by deriving the coefficients statistically, which requires only information about the noise levels of the offline data $\sigma^2$ and the online measurements $\sigma_p^2$.

\subsection{Numerical Results}

In this subsection, we compare the control performance of three receding horizon predictive control algorithms: 1) \textit{Sub-PC}: subspace predictive control (\ref{eqn:subpc}), 2) \textit{DeePC}: regularized DeePC (\ref{eqn:deepc}), and 3) \textit{\newacro-PC}: predictive control with the \newmodel\ (\ref{eqn:mlepc}). In \textit{DeePC}, the algorithm is tested on a nine-point logarithmic grid of $\lambda_g$ between 10 and 1000. In this example, the control performance is found to not be sensitive to the value of $\lambda_y$, so a fixed value of $\lambda_y=1000$ is used. In \textit{\newacro-PC}, the noise level $\sigma^2$ is estimated using (\ref{eqn:sigest}). Assuming the same sensor for offline and online measurements, we select $\sigma^2=\sigma_p^2$. To benchmark the performance, we also consider the ideal MPC algorithm (denoted by \textit{MPC}), where both the true state-space model and the noise-free state measurements are available. The result of this benchmark algorithm is thus deterministic and gives the best possible control performance with receding horizon predictive control.

In this example, we consider the LTI system (\ref{eqn:sys1}). Unless otherwise specified, the following parameters are used in the simulation: 
$
    N=200,\,L_0=4,\,L'=11,\,\sigma^2=\sigma_p^2=1,\,Q=R=1.
$
No input and output constraints are enforced, i.e., $\pazocal{U}=\mathbb{R}^{L' n_u}$ and $\pazocal{Y}=\mathbb{R}^{L' n_y}$. A square-wave reference trajectory labeled \textit{Ref} in Figure~\ref{fig:4}(a) is to be tracked. The offline data are generated with unit i.i.d. Gaussian input signals. For each case, 100 Monte Carlo simulations are conducted. In each run, 60 time steps are simulated. The control performance is assessed by the true stage cost over all time steps, i.e.,
\begin{equation}
    J=\sum_{k=0}^{N_c-1}\left(\norm{y_k^0-r_k}_Q^2+\norm{u_k}_R^2\right),
\end{equation}
where $N_c=60$ and $y_k^0$ is the true output at time $k$. 

When comparing the closed-loop performance, the best choices of $\lambda_g$ in \textit{DeePC} are selected with an oracle for each run as plotted in Figure~\ref{fig:7} (green) for different noise levels. It can be seen that, even for the same control task, the optimal value of this hyperparameter is not only sensitive to the noise level but also to the specific realization of the noise. This makes the tuning process difficult in practice. The optimal value of $\lambda_g$ is used in all the following simulations. On the other hand, the noise level estimator (\ref{eqn:sigest}) used in \textit{\newacro-PC} is very effective in estimating $\sigma^2$ as demonstrated in Figure~\ref{fig:7} (yellow).

\begin{figure}[htb]
\centering
\includegraphics[width=\columnwidth]{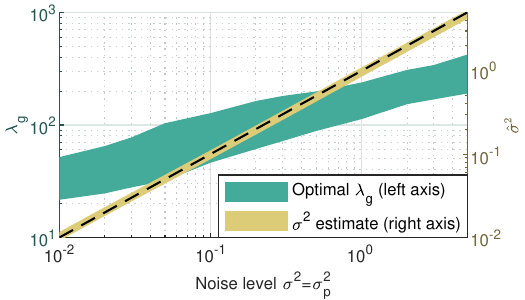}
\caption{Hyperparameter tuning in \textit{DeePC} ($\lambda_g$) and \textit{\newacro-PC} ($\sigma^2$) for different noise levels. Colored area: values within one standard deviation. The dashed line shows the true noise level.}
\label{fig:7}
\end{figure}

The optimization problems are all formulated as quadratic programming problems and solved by MOSEK. The computation time for all three algorithms is similar. The effect of the proposed data compression scheme in Section~\ref{sec:precon} is illustrated in Figure~\ref{fig:8} with the example of \textit{\newacro-PC}. By applying the preconditioning, the online computational complexity no longer depends on the data size $N$.

\begin{figure}[htb]
\centering
\includegraphics[width=\columnwidth]{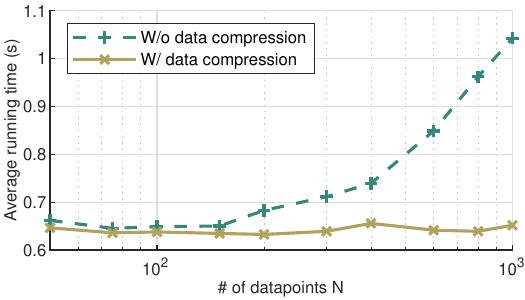}
\caption{Average computation time of \textit{\newacro-PC} with and without the data compression scheme.}
\label{fig:8}
\end{figure}

The closed-loop input-output trajectories of different control algorithms are plotted in Figure~\ref{fig:4}. The closed-loop trajectories are characterized by the range within one standard deviation of the Monte-Carlo simulation. The \textit{\newacro-PC} algorithm obtains the closest match to the benchmark trajectory \textit{MPC}. \textit{Sub-PC} applies more aggressive control inputs which results in much larger input costs, whereas the control strategy in \textit{DeePC} is more conservative which results in larger tracking errors. \textit{\newacro-PC} also has the smallest variance of input trajectories against different noise realizations. The boxplot of the control performance measure $J$ is shown in Figure~\ref{fig:5}, which confirms that \textit{\newacro-PC} performs better than \textit{Sub-PC} and \textit{DeePC} in this control task, even when the optimal tuning of $\lambda_g$, which is not realistic in practice, is used.

\begin{figure*}[htb]
\centering
\includegraphics[width=7.16in]{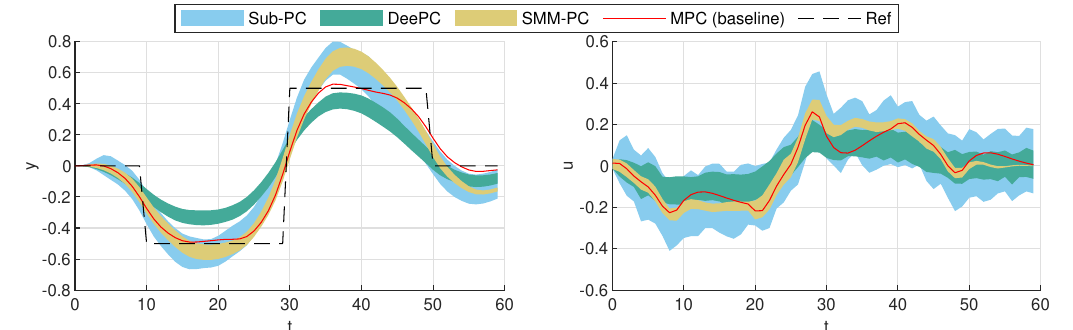}
\footnotesize (a) Output trajectory \hspace{23em} (b) Input trajectory
\caption{Comparison of closed-loop input-output trajectories with different control algorithms. Colored area: trajectories within one standard deviation.}
\label{fig:4}
\end{figure*}

\begin{figure}[htb]
\centering
\includegraphics[width=\columnwidth]{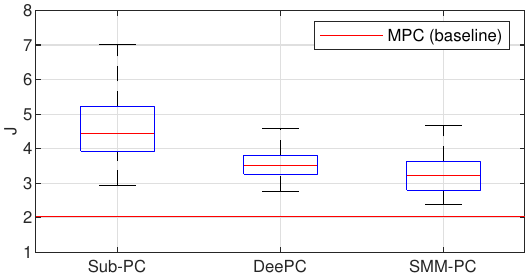}
\caption{Comparison of the control performance in terms of total stage costs $J$ with different control algorithms with 100 simulations ($\sigma^2=\sigma_p^2=1, N=200$).}
\label{fig:5}
\end{figure}

The effects of different offline data sizes $N$ and noise levels $\sigma^2,\sigma_p^2$ are investigated in Figure~\ref{fig:6}. As shown in Figure~\ref{fig:6}(a), the control performance of \textit{\newacro-PC} is not sensitive to the number of datapoints and performs uniformly better among the three algorithms. In fact, good performance is already obtained at only $N=75$. \textit{DeePC} does not perform very well with small data sizes but gets a similar performance to \textit{\newacro-PC} for large $N$. \textit{Sub-PC} cannot achieve a satisfying result even with a large data size because, as discussed in Section~\ref{sec:comp}, the subspace predictor is problematic to deal with online measurement noise $\sigma_p^2$, which cannot be averaged out by a large $N$. Figure~\ref{fig:6}(b) shows that all three algorithms perform similarly at low noise levels as they are all stochastic variants of the noise-free algorithm (\ref{eqn:deepc0}). \textit{\newacro-PC} obtains slightly worse results under low noise levels ($\sigma^2=\sigma_p^2 < 0.05$) compared to the optimally tuned \textit{DeePC} with an oracle, but the performance improvement of \textit{\newacro-PC} is significant for higher noise levels.

\begin{figure}[htb]
\centering
\includegraphics[width=\columnwidth]{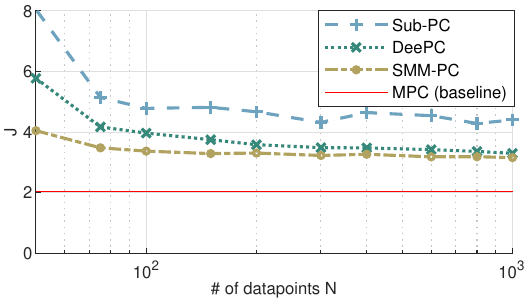}
\footnotesize (a) Performance as a function of the number of datapoints
\includegraphics[width=\columnwidth]{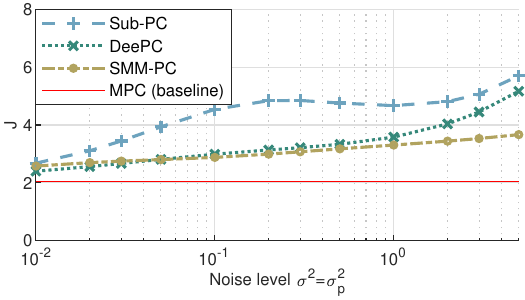}
\footnotesize (b) Performance as a function of the noise level
\caption{The effect of different offline data sizes and noise levels on the control performance.}
\label{fig:6}
\end{figure}

\section{Conclusions}
\label{sec:7}

In this work, we propose a novel statistical framework to estimate data-driven models from large noise-corrupted datasets. This is formulated as a maximum likelihood estimation problem. The problem is solved efficiently by approximating it as a sequential quadratic program with data compression and data-driven noise level estimation. This framework extends the current works on data-driven methods to noisy data by providing an optimal solution to the underdetermined implicit model structure and establishing the \newmodel.

With the \newmodel, two approaches in system identification and receding horizon control are developed. They obtain an impulse response estimate with less restrictive assumptions and an effective tuning-free data-driven receding horizon control algorithm respectively. The results from these two approaches demonstrate that the proposed framework can improve the state-of-the-art methods in both data-driven simulation and control in the presence of noisy data.\\


\appendices

\section{\quad\ \ Derivation of $\Sigma_y$}
\label{sec:app1}
Let $\zeta_i\in \mathbb{R}^L$, $i=1,\dots,LM$ be the $i$-th column of $\left(g^\mathsf{T}\otimes \mathbb{I}\right)$, $\pazocal{S}=\left\{(i,j)\left|\left(\text{vec}(Y)\right)_i=\left(\text{vec}(Y)\right)_j\right.\right\}$, and $\Sigma_{y1}=\left(g^\mathsf{T}\otimes \mathbb{I}\right)\Sigma_{yd}\left(g\otimes \mathbb{I}\right)$. According to (\ref{eqn:ry}), we have
\begin{equation}
\Sigma_{y1}=\sigma^2\sum_{(i,j)\in\pazocal{S}}\zeta_i\zeta_j^\mathsf{T}.
\end{equation}
Let the $i$-th and the $j$-th entries of $\text{vec}(Y)$ correspond to the $(q,r)$-th and the $(s,t)$-th entries of $Y$ respectively, i.e., $i=(r-1)L+q$, $j=(t-1)L+s$. From the Hankel structure, the pair $(i,j)\in \pazocal{S}$ iff $q+r=s+t$. According to the structure of $\left(g^\mathsf{T}\otimes \mathbb{I}\right)$, we have $\zeta_i=g_r\mathbf{e}_q$, $\zeta_j=g_t\mathbf{e}_s$, where $\mathbf{e}_q\in \mathbb{R}^L$ is the unit vector with $q$-th non-zero entry, and similarly for $\mathbf{e}_s$. Thus,
\begin{equation}
\Sigma_{y1}=\sigma^2\sum_{q+r=s+t}g_r g_t\mathbf{e}_q\mathbf{e}_s^\mathsf{T}.
\end{equation}
So the $(q,s)$-th entry of $\Sigma_{y1}$ is given by
\begin{equation}
\left(\Sigma_{y1}\right)_{q,s}=\sigma^2\sum_{q+r=s+t}g_r g_t,
\end{equation}
which directly leads to (\ref{eqn:py}).

\section{\quad\ \ Proof of Proposition~\ref{prop:2}}
\label{sec:app2}
Let $K_\lambda=F^{-1}-F^{-1}U^\mathsf{T}(U F^{-1}U^\mathsf{T})^{-1}UF^{-1}$ and $g_\lambda=K_\lambda Y_p^\mathsf{T}\mathbf{y}_{\text{ini}}+\pazocal{Q}\tilde{\mathbf{u}}$. From the structure of (\ref{eqn:optsqp}), when $\lambda\rightarrow 0$, $g_\lambda$ converges to $g_{\text{pinv}}$. When $\lambda=L'\sigma_p^2/\norm{g_\lambda}_2^2$, $g_\lambda=g_{\text{\newacro}}$. Then we have
$
\text{cov}(g_\lambda)=\sigma_p^2(K_\lambda Y_p^\mathsf{T})(K_\lambda Y_p^\mathsf{T})^\mathsf{T}
$.
The derivative of $\text{tr}(\text{cov}(g_\lambda))$ with respect to $\lambda$ is calculated as follows.
\begin{equation}
    \begin{aligned}
    \dfrac{\partial\,\text{tr}(\text{cov}(g_\lambda))}{\partial \left(F^{-1}\right)_{i,j}}&=\text{tr}\left[\left(
    \dfrac{\partial\,\text{tr}(\text{cov}(g_\lambda))}{\partial K_\lambda}\right)^\mathsf{T}\dfrac{\partial K_\lambda}{\partial \left(F^{-1}\right)_{i,j}}\right]\\
    &=2\sigma_p^2\,\text{tr}\left[\left(Y_p^\mathsf{T}Y_p K_\lambda\right)^\mathsf{T}K_\lambda F\Delta(i,j) FK_\lambda\right],
    \end{aligned}
\end{equation}
where the $(i,j)$-th element of $\Delta(i,j)\in\mathbb{R}^{M\times M}$ is 1 and the other elements are 0. Then,
\begin{equation}
    \begin{aligned}
    \dfrac{\partial\,\text{tr}(\text{cov}(g_\lambda))}{\partial \lambda}&=\text{tr}\left[\left(
    \dfrac{\partial\,\text{tr}(\text{cov}(g_\lambda))}{\partial F^{-1}}\right)^\mathsf{T}\dfrac{\partial F^{-1}}{\partial \lambda}\right]\\
    &=-2\sigma_p^2\,\text{tr}\left[\left(FK_\lambda\left(Y_p^\mathsf{T}Y_p K_\lambda\right)^\mathsf{T}K_\lambda F\right)^\mathsf{T}F^{-2}\right]\\
    &=-2\sigma_p^2\,\text{tr}\left(K_\lambda Y_p^\mathsf{T}Y_p K_\lambda K_\lambda\right),
    \end{aligned}
\end{equation}
According to the Schur complement, since
\begin{equation}
    \begin{bmatrix}
    F^{-1}&F^{-1}U^\mathsf{T}\\UF^{-1}&UF^{-1}U^\mathsf{T}
    \end{bmatrix}=
    \begin{bmatrix}
    \mathbb{I}\\U
    \end{bmatrix}
    F^{-1}
    \begin{bmatrix}
    \mathbb{I}&U^T
    \end{bmatrix}\succ 0,
\end{equation}
we have $K_\lambda\succ 0$. Together with $K_\lambda Y_p^\mathsf{T}Y_p K_\lambda\succ 0$, we have
$\partial\,\text{tr}(\text{cov}(g_\lambda))/\partial \lambda < 0$
for all $\lambda$. This directly leads to Proposition~\ref{prop:2}.

\bibliographystyle{IEEEtran}
\bibliography{IEEEabrv,refs}

\end{document}